\documentclass[aps,prl,reprint,superscriptaddress,amsmath,amssymb,longbibliography,floatfix]{revtex4-2}

\usepackage{bm}
\usepackage{mathtools}
\usepackage{microtype}
\usepackage{hyperref}
\hypersetup{hidelinks}
\usepackage{booktabs}
\usepackage{graphicx}
\usepackage{xcolor}
\usepackage{amsthm}

\newcommand{\R}{\mathbb{R}}
\newcommand{\C}{\mathbb{C}}

\newcommand{\dist}{\operatorname{dist}}
\newcommand{\articletitle}{A Fixed Universal Determinant is Variationally Complete for Continuum Fermions}

\newtheorem{theorem}{Theorem}
\newtheorem{lemma}[theorem]{Lemma}
\newtheorem{proposition}[theorem]{Proposition}

\theoremstyle{remark}
\newtheorem{remark}[theorem]{Remark}

\newcommand{\SN}{S_N}
\newcommand{\sgn}{\operatorname{sgn}}

\newcommand{\codim}{\operatorname{codim}}

\begin{document}

\title{\articletitle}

\author{Giuseppe Carleo}
\affiliation{Institute of Physics, \'Ecole Polytechnique F\'ed\'erale de Lausanne (EPFL), CH-1015 Lausanne, Switzerland}
\affiliation{Center for Quantum Science and Engineering, \'Ecole Polytechnique F\'ed\'erale de Lausanne (EPFL), CH-1015 Lausanne, Switzerland}

\author{Riccardo Rossi}
\affiliation{Institute of Physics, \'Ecole Polytechnique F\'ed\'erale de Lausanne (EPFL), CH-1015 Lausanne, Switzerland}
\affiliation{Center for Quantum Science and Engineering, \'Ecole Polytechnique F\'ed\'erale de Lausanne (EPFL), CH-1015 Lausanne, Switzerland}
\affiliation{CNRS, Laboratoire de Physique Th\'eorique de la Mati\`ere Condens\'ee, Sorbonne Universit\'e, 75005 Paris, France}
\date{\today}

\begin{abstract}
How many Slater determinants does an accurate variational description of interacting fermions require? Exact expansions in a finite basis need combinatorially many, and state-of-the-art fermionic neural quantum states stack growing numbers of them. We prove that, in the norms that govern variational calculations, at most two are needed, independently of the number of particles and of the target accuracy. A single universal Slater determinant---specified in advance, independent of both the system and the state---multiplied by a smooth bosonic wave function approximates any fermionic wave function in up to three spatial dimensions in the first-order Sobolev norm, which controls the variational energy. Reaching the second-order Sobolev norm---for Coulomb interactions, the domain of the Hamiltonian, which bounds the variance of the local energy at the core of variational Monte Carlo---requires at most one additional fixed determinant, and only in three dimensions. Antisymmetry therefore costs at most two universal determinants and no expressiveness: generalized Slater--Jastrow neural quantum states are variationally complete.
\end{abstract}

\maketitle

\textit{Neural quantum states and expressivity.}---Interacting fermions, from electrons in molecules and solids to dense nuclear matter, are the central and hardest instance of the quantum many-body problem: the Pauli principle forces the wave function to change sign under particle exchange, and building this antisymmetry into an expressive variational family has been a structural challenge of many-body physics for decades, from Slater--Jastrow and backflow trial states~\cite{FeynmanCohen1956,McMillan1965,CeperleyChesterKalos1977,KwonCeperley1998} to the nodal surfaces that control fixed-node quantum Monte Carlo~\cite{Reynolds1982,Ceperley1991,Foulkes2001}. Neural quantum states (NQS) parametrize the amplitude of each many-body configuration directly with a network~\cite{Carleo2017,Carleo2019}, and a substantial representation theory has since developed: bounds on the entanglement an architecture can support, constructions of volume-law and tensor-network states, and universality results relating network depth to expressive power~\cite{Gao2017,Deng2017,Levine2019,Sharir2022}. On a lattice, antisymmetry is naturally carried by the second-quantized fermionic algebra through Jordan--Wigner strings or a reference determinant in an occupation basis~\cite{Nomura2017,Choo2020}. In the continuum, the state of $N$ spinless fermions in $d$ spatial dimensions is instead a first-quantized antisymmetric wave function on $\R^{d\times N}$. In NQS, the antisymmetry is usually supplied by combining a permutation-equivariant network with explicit antisymmetric objects, typically Slater determinants or Pfaffians~\cite{Luo2019,Pfau2020,Hermann2020,vonGlehn2023,Cassella2023,Lovato2022,Pescia2022,HermannRev2023,kim_neural-network_2024,Moreno2022,Pescia2024}.

How many antisymmetric blocks are then unavoidable? The answer depends sharply on the notion of approximation and on what is allowed as a block (Table~\ref{tab:hierarchy}). Exact expansions in a finite basis require combinatorially many determinants, and a single generalized backflow determinant can represent any fermionic wave function exactly only at the price of orbitals that are discontinuous for $d>1$ and target dependent~\cite{Hutter2020}. For uniform pointwise approximation with continuous factors the cost remains steep: $K=O(\epsilon^{-Nd})$ backflow--Vandermonde terms at accuracy $\epsilon$~\cite{Han2019}, and $K=dN+1$ fixed determinants for an exact continuous representation~\cite{Ye2024}, a count reduced by Fu's Fermi Sets to $K=1$ in $d=1$, $K=2$ in $d=2$, and $K\le dN+1$ for $d\ge3$~\cite{Fu2026}. The obstruction is geometric: a fixed determinant carries an artificial nodal surface that no continuous bosonic factor can cure uniformly.
\begin{table*}[t]
\caption{Representative universality bounds for fermionic states.
Here $K$ denotes the number of determinants.}
\label{tab:hierarchy}
\begin{ruledtabular}
\begin{tabular}{@{}llll@{}}
Work & Approximation & Antisymmetric representation & Bound on $K$ \\
\hline
Han \emph{et al.}~\cite{Han2019}
& uniform 
& backflow--Vandermonde sum
& $O(\epsilon^{-Nd})$ \\

Ye \emph{et al.}~\cite{Ye2024}
& exact 
& fixed determinants $\times\;C^0$
& $dN+1$ \\

Fu~\cite{Fu2026}
& uniform
& fixed determinants $\times \;C^0_+$ 
& $1$ ($d=1$), $2$ ($d=2$), $dN+1$ ($d\ge3$) \\

\hline
\textbf
{This work}
& {$H^s(\mathbb{R}^{d\times N})$}
&fixed determinants $\times \;C^\infty_{c,+}$
& $\bm 1\, (H^1, d\le 3),\;\bm 1\, (H^2, d\le 2),\;  \bm 2\, (H^2,d=3)$ \\
\end{tabular}
\end{ruledtabular}
\end{table*}

Uniform approximation is, however, different than what physics requires: the supremum norm constrains the amplitude at every configuration, while energies and observables are integrals over particle positions involving as integrand the value of the wave function and of the derivatives. More natural settings for variational calculations are instead Sobolev spaces. For a Coulomb Hamiltonian in three dimensions, the quadratic form of the energy is defined on the first-order Sobolev space $H^1(\mathbb{R}^{d\times N})$~\cite{Lieb1983}, and the operator domain is the second-order space $H^2(\mathbb{R}^{d\times N})$~\cite{Kato1951}, whose norm controls the variance of the local energy, the central estimator of variational Monte Carlo (VMC)~\cite{McMillan1965,CeperleyChesterKalos1977,Foulkes2001}. It is in these topologies that the fermionic variational representation question should be posed.

In this Letter we prove that, in the Sobolev norms, the cost of antisymmetry collapses. For $d\leq3$, a single fixed, universal complex Slater determinant multiplied by a smooth compactly-supported bosonic wave function is dense in the fermionic $H^1$ space; in $H^2$, one determinant suffices for $d\leq2$ and at most two for $d=3$ (Table~\ref{tab:hierarchy}). These counts are independent of the particle number and of the target accuracy: a growing determinant expansion is unnecessary for variational completeness, and the entire expressive burden can be carried by one---at the $H^2$ level, at most two---bosonic wave functions. For neural quantum states, the construction amounts to a generalized Slater--Jastrow wave function whose Jastrow factor is sign indefinite and not restricted to few-body correlations.

\textit{Wave function spaces.} For simplicity of discussion, we restrict ourselves here to spinless particles. The extension to spinful particles is straightforward and discussed later. We consider $N$ particles in $d$ spatial dimensions, described by square-integrable wave functions $\psi(\bm r_1,\dots,\bm r_N)$ on $\mathbb{R}^{d\times N}$, forming the space $L^2(\mathbb{R}^{d\times N})$. Particle statistics selects a subspace, indicated by a $\pm$ subscript: fermionic wave functions are antisymmetric under particle permutations and form $L^2_-$, bosonic ones are symmetric and form $L^2_+$. The continuous-space Hamiltonian is $H =
\sum_{i=1}^{N}
\left[
-\frac{\hbar^{2}}{2m}\nabla_i^{2}
+ V_{\mathrm{ext}}(\mathbf{r}_i)
\right]
+ \frac{1}{2}\sum_{i\neq j}
U(\mathbf{r}_i-\mathbf{r}_j)$, where $U$ denotes the two-body Coulomb interaction and $V_{\text{ext}}$ a one-body potential. Wave functions of finite average kinetic energy form the first-order Sobolev space $H^1_{-}(\mathbb{R}^{d\times N})$ which, in three dimensions and under mild assumptions on $V_{\text{ext}}$, is also the space of finite average energy~\cite{Lieb1983}. The domain $D(H)$ on which the Hamiltonian itself is defined coincides in three dimensions with the second-order Sobolev space, for fermions $H^2_{-}(\mathbb{R}^{d\times N})$: the functions of $H^1_{-}$ with square-integrable Laplacian~\cite{Kato1951}. The Hamiltonian domain is not a mere technicality in VMC practice: the main Monte Carlo estimator is the local energy 
\begin{equation}
    E_{\text{loc}}(\bm r_1,\dots,\bm r_N)=\frac{(H\psi)(\bm r_1,\dots,\bm r_N)}{\psi(\bm r_1,\dots,\bm r_N)},
\end{equation}
and $\psi\in H^2$ guarantees that its Monte Carlo variance is bounded. $H^1$ and $H^2$ are the two Sobolev norms in which we pose, and answer, the fermionic representation question.

For neural-network architectures optimized with gradient methods, working with smooth functions subject to no antisymmetry constraint is a practical necessity. We denote by $C^\infty_c$ the set of compactly-supported smooth functions. We will use the well-known fact that $C^\infty_c(\mathbb{R}^{d\times N})$ is dense in $H^s(\mathbb{R}^{d\times N})$ in the $H^s$ norm~\cite{AronszajnSmith1961}. We can now state the main result of this work. 

\begin{theorem}\label{thm:main}
Let $s\in\{1,2\}$ and $d\le3$, and let $\psi\in H^s_-(\mathbb{R}^{d\times N})$ be a spinless fermionic wave function of $N$ particles in $d$ spatial dimensions. Then there exist sequences of smooth compactly-supported bosonic $N$-particle wave functions $\phi_{\alpha,n}\in C^\infty_{c,+}(\mathbb{R}^{d\times N})$ such that
\begin{equation}
\lim_{n\to\infty} \left\lVert\psi - \sum_{\alpha=1}^{K_s^{(d)}}\mathcal{D}_{\alpha}^{(d)} \,\phi_{\alpha,n}\right\rVert_{H^s}=0,
\end{equation}
where the $\mathcal{D}_{\alpha}^{(d)}$ are universal, $\psi$-independent, antisymmetric polynomials in the particle coordinates, and where $K_1^{(d)}=1$ for $d\in\{1,2,3\}$, $K_2^{(d)} = 1$ for $d\in\{1,2\}$, and $K_2^{(3)}\le 2$.
\end{theorem}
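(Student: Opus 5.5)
We choose a universal antisymmetric polynomial whose zero set is small, and argue via capacity. For $K=1$ the natural choice is a complex determinant of monomials in a complexified coordinate. For $d=2$ write $z_j=x_j+iy_j$ and take $\mathcal{D}=\prod_{i<j}(z_i-z_j)$, the Vandermonde in $z$. Its zero set is exactly the coincidence set $\{\bm r_i=\bm r_j\}$, which has real codimension $d$. For $d=3$ use the complex coordinate $w_j=x_j+i\,y_j$ built from two components of $\bm r_j$. Then $\prod_{i<j}(w_i-w_j)$ vanishes on $\{x_i=x_j,\,y_i=y_j\}$, a set of real codimension $2$. For $d=1$ take the real Vandermonde, whose zero set has codimension $1$. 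In every case $\mathcal{D}$ is antisymmetric, so any $\psi\in L^2_-$ can be written as $\psi=\mathcal{D}\,\phi$ with $\phi=\psi/\mathcal{D}$ symmetric, defined off the zero set $Z=\{\mathcal{D}=0\}$. The whole proof reduces to approximating $\psi$ in $H^s$ by functions vanishing near $Z$. Then $\phi$ is smooth, compactly supported and symmetric (after symmetrization) away from $Z$, and $\mathcal{D}\phi$ reproduces it.

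\textbf{Step 1 (density of functions vanishing near $Z$).} First approximate $\psi$ in $H^s$ by antisymmetric $C^\infty_c$ functions, by antisymmetrizing the Aronszajn--Smith approximants; antisymmetrization is an $H^s$-contraction. Next multiply by a symmetric cutoff $\chi_\epsilon=\prod_{i<j}\eta_\epsilon(\cdot)$ that vanishes in a neighborhood of $Z$ and equals $1$ outside a slightly larger neighborhood. The key estimate is $\|(1-\chi_\epsilon)f\|_{H^s}\to 0$ for smooth antisymmetric $f$. This holds when $Z$ has zero $H^s$-capacity, which, for a smooth submanifold (locally) of real codimension $k$, requires $k>2s$, with logarithmic cutoffs in the borderline case $k=2s$.

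\textbf{Step 2 (using antisymmetry in low codimension).} The case $d=1$ has codimension only $1$, so capacity alone fails there. Instead use that $f$ vanishes on $\{x_i=x_j\}$. Then $f/\mathcal{D}$ is already smooth: a smooth antisymmetric function is divisible by the Vandermonde, which is the classical symmetric-function division. So $K=1$ holds for $d=1$ in every $H^s$ with no cutoff at all. For $d=2$ with $\mathcal{D}$ the complex Vandermonde, $Z$ is the codimension-$2$ coincidence set. Since $f$ vanishes there, $(1-\chi_\epsilon)f$ is of size $O(\epsilon)$ on the support, which gains a power in the estimates. One then checks the $H^2$ norm: the second derivatives of $\chi_\epsilon$ are $O(\epsilon^{-2})$ on a set of volume $\epsilon^2$, and multiplying by $|f|^2\lesssim\epsilon^2$ gives $O(\epsilon^2)$. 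So both $H^1$ and $H^2$ work for $d\le2$. For $d=3$ with the $w$-Vandermonde, $Z\supset\{x_i=x_j,y_i=y_j\}$ has codimension $2$, but $f$ does not vanish there, since it vanishes only on the codimension-$3$ coincidence set. A log cutoff then gives zero $H^1$ capacity (borderline $k=2s=2$), which settles $H^1$, $d=3$.

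\textbf{Step 3 ($H^2$, $d=3$, the main obstacle).} Codimension $2$ has positive $H^2$ capacity, so a single $w$-determinant fails. Instead take two determinants, $\mathcal{D}_1$ built from $(x,y)$ and $\mathcal{D}_2$ built from $(y,z)$. Their zero sets $Z_1$ and $Z_2$ intersect in a set of codimension at least $3$. Any point of this intersection lies either on the coincidence set, where $f$ vanishes, or on a codimension-$4$ set, which has zero $H^2$ capacity since $4=2s$, via a log cutoff. Use a smooth symmetric partition of unity $\rho_1+\rho_2=1$ on the complement of $Z_1\cap Z_2$, with $\rho_\alpha$ vanishing near $Z_\alpha$. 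Then set $\phi_\alpha=\rho_\alpha\chi f/\mathcal{D}_\alpha$. The delicate part is controlling the derivatives of $\rho_\alpha$ near $Z_1\cap Z_2$. The plan is to make the $\rho_\alpha$ homogeneous of degree $0$ in the transverse variables and to absorb the singular growth using the vanishing of $f$ on the coincidence set, together with Hardy-type inequalities. This is the step I expect to require the most care.
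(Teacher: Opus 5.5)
Your route is the paper's: the same universal determinants (real Vandermonde for $d=1$, $x^{(1)}+ix^{(2)}$ for $d\ge2$, plus $x^{(2)}+ix^{(3)}$ in $d=3$), reduction to $C^\infty_{c,-}$ by antisymmetrization, exact Vandermonde division in $d=1$, and logarithmic deletion of the codimension-2 nodal set in $H^1$. In $d=3$ you also use the same splitting of $Z_1\cap Z_2$ into codimension-3 collision planes and codimension-4 planes. The $d=2$, $H^2$ step, however, is miscounted. For a cutoff at scale $\epsilon$ you get $\int|f|^2|D^2\chi_\epsilon|^2\lesssim\epsilon^{2}\cdot\epsilon^{-4}\cdot\epsilon^{2}=O(1)$; you used $|D^2\chi_\epsilon|$ rather than its square. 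The cross term $\int|\nabla f|^2|\nabla\chi_\epsilon|^2\lesssim\epsilon^{-2}\cdot\epsilon^{2}$, which you omitted, is also $O(1)$. For $f$ behaving like $(x_1-x_2)^{(1)}$ near a collision, a scaling computation shows that $\|f-f\chi_\epsilon\|_{H^2}$ does not tend to zero. First-order vanishing turns the $H^2$ problem in codimension 2 into the $H^1$ problem in codimension 2, which is exactly borderline, not past it. So you need the logarithmic cutoff here too. Let $\delta$ be a (regularized) distance to $Z$. Then $|\nabla\chi_\epsilon|\lesssim(|\log\epsilon|\,\delta)^{-1}$, $|D^2\chi_\epsilon|\lesssim(|\log\epsilon|\,\delta^2)^{-1}$ and $|f|\lesssim\delta$, so both terms are $\lesssim(|\log\epsilon|\,\delta)^{-1}$ on $\{\epsilon^2<\delta<\epsilon\}$. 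Since $\int_{\epsilon^2<\delta<\epsilon}\delta^{-2}\,dX=O(|\log\epsilon|)$ on compacts in codimension 2, the squared error is $O(|\log\epsilon|^{-1})$. This is part (iii) of the paper's deletion lemma. In $d=3$ the collision planes have codimension 3, so there the vanishing of $f$ does take you past the borderline.

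The ``delicate part'' you anticipate in Step 3 does not exist. Because $\rho_1+\rho_2=1$ off $Z_1\cap Z_2$ and $\chi f$ vanishes on a neighborhood of $Z_1\cap Z_2$, you have $\mathcal{D}_1\phi_1+\mathcal{D}_2\phi_2=\chi f$ identically. The $H^2$ error is therefore just $\|f-\chi f\|_{H^2}$, which the deletion step already controls. The $\rho_\alpha$ never enter any norm, and the theorem imposes no bounds on $\phi_{\alpha,n}$, which may blow up with $n$. All you need is $\phi_\alpha\in C^\infty_{c,+}$. This holds once $\rho_\alpha/\mathcal{D}_\alpha$ is smooth on the complement of $Z_1\cap Z_2$: take a smooth Urysohn function there separating the disjoint relatively closed sets $Z_1\setminus Z_2$ and $Z_2\setminus Z_1$, then symmetrize, which preserves this because both sets are permutation invariant. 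No homogeneity or Hardy inequalities are required. The paper simply takes $\rho_\alpha=|\mathcal{D}_\alpha|^2/S$ with $S=|\mathcal{D}_1|^2+|\mathcal{D}_2|^2$. Then $\rho_\alpha/\mathcal{D}_\alpha=\overline{\mathcal{D}_\alpha}/S$ is manifestly smooth off $Z_1\cap Z_2$, giving $\phi_\alpha=\chi f\,\overline{\mathcal{D}_\alpha}/S$ without any Urysohn construction.
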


The complete proof of Theorem~\ref{thm:main} is presented in the Supplemental Material, while below we sketch the main conceptual steps. After introducing the main objects of the proof, we first discuss the $H^1$ case and then show the modifications needed to establish the $H^2$ variant. Finally, we note at the end that the theorem is also applicable to the spinful case without modification.
\begin{figure*}[t]
\centering
\includegraphics[width=0.9\textwidth]{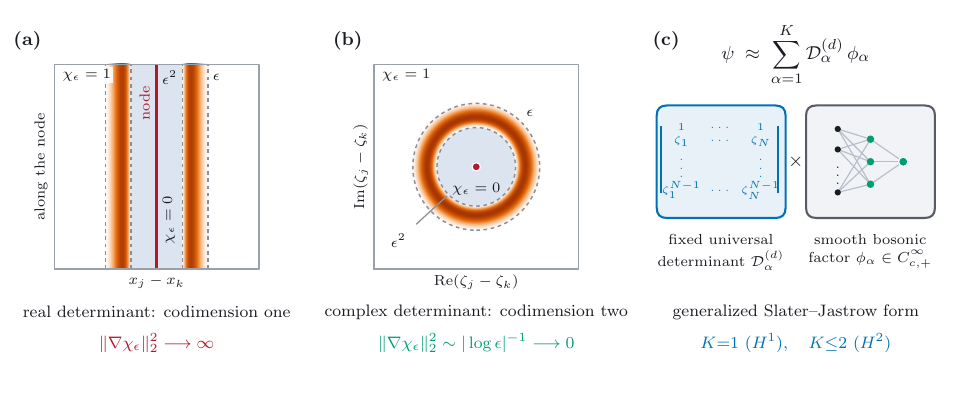}
\caption{\textit{Why one fixed complex determinant is variationally complete in $d>1$, and generalized Slater-Jastrow representation.} In panels (a) and (b), the heat maps show the calculated kinetic-energy cost $|\nabla\chi_\epsilon|^2$ for a particular choice of $\chi_\epsilon$, and the shaded region is the deleted set where $\chi_\epsilon=0$. (a) A real determinant (see Eq.~\eqref{eq-real-det}) has a codimension-one nodal surface: erasing it forces the cutoff $\chi_\epsilon$ across a shrinking strip, at divergent kinetic energy cost $\|\nabla\chi_\epsilon\|_2^2\to\infty$. (b) A complex determinant (see Eq.~\eqref{eq-cpx-det}) imposes two real equations, so the nodal surface has codimension two and admits the insertion of the cutoff at vanishing kinetic energy cost $\|\nabla\chi_\epsilon\|_2^2\sim|\log\epsilon|^{-1}\to0$.  (c) The resulting representation from Theorem~\ref{thm:main} is a generalized Slater--Jastrow form; $\psi\underset{H^1}{\approx}\mathcal{D}_1^{(d)}\phi_1$ and $\psi\underset{H^2}{\approx}\sum_{\alpha=1}^K\mathcal{D}_\alpha^{(d)}\phi_\alpha$ with fixed universal determinants times smooth (optionally compactly-supported) bosonic factors $\phi_\alpha\in C^\infty_{c,+}$, with $K=1$ in $H^1$ ($d\le3$) and $K\le2$ in $H^2$. This schematic was produced with \textsc{Matplotlib} code written by Claude Fable~5 (Anthropic).}
\label{fig:capacity}
\end{figure*}

\textit{Construction of the universal determinants.} We give below the explicit expression of the fixed Slater determinants $\mathcal{D}_\alpha^{(d)}$, which are all of the Vandermonde form. Let $\mathcal{N}_{\alpha}^{(d)}$ be the set of zeros of $\mathcal{D}_{\alpha}^{(d)}$, which we refer to as the nodal surface. We also introduce the collision nodal surface $\mathcal{N}_c^{(d)}$, which is the set of points where two or more particles share the same position and where any continuous fermionic wave function must vanish.

 For $d=1$ it is sufficient to consider the real determinant
\begin{equation}\label{eq-real-det}
    \mathcal{D}_{1}^{(1)}(\bm r_1,\dots,\bm r_N) = \prod_{j<k} (x_k-x_j) = \det\big(x_j^{\,k-1}\big)_{j,k=1}^{N}.
\end{equation}
  The theorem is particularly simple to motivate in one dimension using the $\mathcal{D}_{1}^{(1)}$ antisymmetric factor: as $\mathcal{N}_{1}^{(1)}\equiv \mathcal{N}_c^{(1)}$, any smooth antisymmetric function $\psi_n\in C^\infty_{c,-}$ approximating $\psi\in H^s_-$ can  be safely divided by $\mathcal{D}_{1}^{(1)}$, providing another smooth symmetric function $\phi_{1,n}=\frac{\psi_n}{\mathcal{D}_1^{(1)}}\in C^\infty_{c,+}$. This result is perhaps not surprising in view of the long history of fermion-boson mappings in one dimension~\cite{Girardeau1960}.

For $d>1$, we introduce the complex determinant
\begin{equation}\label{eq-cpx-det}
    \mathcal{D}_{1}^{(d)}(\bm r_1,\dots,\bm r_N) = \prod_{j<k} (\zeta_k-\zeta_j) = \det\big(\zeta_j^{\,k-1}\big)
\end{equation}
where $\zeta_j = x_j+iy_j\in\mathbb{C}$, and for $d=3$ we define an additional complex determinant
\begin{equation}
    \mathcal{D}_{2}^{(3)}(\bm r_1,\dots,\bm r_N)=\prod_{j<k}(\xi_k-\xi_j)=\det\big(\xi_j^{\,k-1}\big),
\end{equation}
where $\xi_j = y_j + i z_j\in\mathbb{C}$.  These forms of Slater determinants were also considered in previous works~\cite{Ye2024,Fu2026}.

\textit{The $\mathcal{N}_{1}^{(d)}$ nodal surface.} We focus here on $\mathcal{N}_1^{(d)}$ for $d>1$, which is the union of $\mathcal{N}_{1;(j,k)}^{(d)}$ for all pairs of particles $(j,k)$
\begin{equation}
    \mathcal{N}_{1;(j,k)}^{(d)} =\left\{(\bm r_1,\dots,\bm r_N)\in \mathbb{R}^{d\times N}\,\Big|\, x_j=x_k,\; y_j=y_k\right\},
\end{equation}
which is a vector subspace of codimension $q=2$.

\textit{Introducing the cutoff function for $\mathcal{N}_1^{(d)}$.} The central device of the proof is a smooth cutoff that deletes the artificial nodal surface at vanishing Sobolev cost. We introduce $\chi_\epsilon\in C^\infty$, identically zero near $\mathcal{N}_1^{(d)}$ and converging to $1$ everywhere expect the nodal surface as $\epsilon\to 0$. Let $\rho(\bm r_1,\dots,\bm r_N)$ be a smooth version of the minimal distance of $(\bm r_1,\dots,\bm r_N)$ from $\mathcal{N}_1^{(d)}$. We choose the cutoff function $\chi_\epsilon$ to be a function of $\rho$, and to be equal to $1$ for $\rho \ge \epsilon$. We remark that, as the nodal surface $\mathcal{N}_1^{(d)}$ is symmetric,  the cutoff function can be chosen symmetric, $\chi_{\epsilon}\in C^\infty_+$. Loosely speaking, $\chi_\epsilon$ must not have a large gradient, as this might spoil the $H^1$ convergence. Crucially, it turns out that one can choose $\chi_\epsilon$ that satisfies the bounds
    $|\chi_\epsilon'(\rho)| \le \frac{C}{\log \epsilon^{-1} \;\rho}$,  $|\chi_\epsilon''(\rho)|\le \frac{C}{\log \epsilon^{-1}\,\rho^2}$, and such that $\chi_\epsilon(\rho)=0$ for $\rho\le \epsilon^2$. These bounds follow directly from the choice $\chi_\epsilon(\rho)=\theta(\log \tau_\epsilon)$, where $\tau_\epsilon=\left(\frac{\rho}{\epsilon^2}\right)^{\frac{1}{\log \epsilon^{-1}}}$, and where $\theta$ is a smooth function that goes from zero to one in $[0,1]$.

\textit{The deletion lemma for the $H^1$ case.}---The essential lemma we use in the proof states that if $u\in C^\infty_c$, then $u\chi_\epsilon\to u$ in $H^1$. To prove it, let us first remark that $u-u\chi_\epsilon$ is clearly in $L^2$, while the gradient reads
\begin{equation}
    ||\nabla(u-u\chi_\epsilon)||_2\le ||\nabla\,u\;(1-\chi_\epsilon)||_2+||u||_{\infty}\, ||\nabla\chi_\epsilon||_{L^2(S)},
\end{equation}
where $S$ is the support of $u$.
The first term vanishes as $\epsilon\to 0$ by dominated convergence, as $1-\chi_\epsilon\to0$ almost everywhere and is bounded by one. The only potentially-problematic term therefore comes from the norm of the gradient of the cutoff function on the compact support of $u$,
\begin{equation}
     ||\nabla\chi_\epsilon||_{L^2(S)}^2 \sim \int_{\epsilon^2}^\epsilon d\rho\, \rho^{q-1} |\chi'_\epsilon|^2\le  \frac{C}{(\log \epsilon^{-1})^2}\int_{\epsilon^2}^\epsilon d\rho\, \rho^{q-3} ,
\end{equation}
where the factor $\rho^{q-1}$ comes from the Jacobian for the change of variable to the $\rho$ coordinate, and where $q$ is the codimension of the subspace $\mathcal{N}_{1;(j,k)}^{(d)}$ (see Fig.~\ref{fig:capacity}b for a sketch). The norm of the gradient of $\chi_\epsilon$ therefore goes to zero for $\epsilon\to 0$ for $q\ge 2$. We also see why using $\mathcal{D}_1^{(1)}$ evaluated on a single Cartesian coordinate would not have been enough for $d>1$: the corresponding nodal surface is made of codimension-one subspaces (see Fig.~\ref{fig:capacity}(a) for a sketch).

\textit{Assembling the proof for the $H^1$ case.}---Let $\psi_n\in C^\infty_{c,-}$ be a sequence converging to $\psi\in H^1_-$ with $||\psi-\psi_n||_{H^1}\le \frac{1}{n}$. Let $\psi_{n,m}=\chi_{\frac{1}{m}}\, \psi_n\in C^\infty_{c,-}$. By the deletion lemma, $\psi_{n,m}\to \psi_n$ in $H^1$ when $m\to \infty$. Let $m(n)$ be such that $||\psi_n-\psi_{n,m(n)}||_{H^1}\le \frac{1}{n}$. Define
 \begin{equation}
     \phi_{1,n} = \frac{\psi_{n,m(n)}}{\mathcal{D}_{1}^{(d)}}\in C^\infty_{c,+}.
 \end{equation}
Then, $||\psi - \mathcal{D}_{1}^{(d)}\,\phi_{1,n}||_{H^1}\le \frac{2}{n}$, which shows that $\psi$ can be approximated in the first-order Sobolev norm by a  generalized Slater-Jastrow wave function (see Fig.~\ref{fig:capacity}(c)).

\textit{Extension to the $H^2$ case.}---It is possible to extend the deletion lemma above to show convergence of $u\chi_\epsilon$ to $u$ in $H^2$ if the minimal codimension of the subspaces of the nodal surface is $q\ge 4$. $\mathcal{N}_1^{(d)}$ is made of subspaces of codimension $q=2$, so the lemma is not directly applicable. 

We first consider $d=2$. In this case, $\mathcal{N}_{1}^{(2)}$ coincides with the collision nodal surface $\mathcal{N}_c^{(2)}$, which means that the wave function must be zero there. If a smooth function is zero on the nodal surface, a variant of the deletion lemma in $H^2$ decreases the minimal codimension to $q\ge 2$, which concludes the proof of the two-dimensional case. 

In the three-dimensional case, we are led to consider the two complex determinants $\mathcal{D}_{1}^{(3)}$ and $\mathcal{D}_{2}^{(3)}$. We focus on the intersection of the nodal surfaces $\mathcal{N}_{1}^{(3)}$ and $\mathcal{N}_{2}^{(3)}$ of the two determinants, $\mathcal{N}^{(3)}_{12} =\cup_{e,e'\in\mathcal{E}}\,L_{e,e'}$, where $ \mathcal{E}$ is the set of pairs of particles, and $L_{(j,k),(j',k')}=\{(\bm r_1,\dots,\bm r_N)\in\mathbb{R}^{d\times N}|\zeta_j=\zeta_k,\;\xi_{j'}=\xi_{k'}\}$ (see Fig.~\ref{fig:mechanism} for a sketch). $L_{e,e}$ has codimension $q=3$, and it is contained in the collision nodal surface, which means that we can apply the variant of the deletion lemma with minimal codimension $q=2$. $L_{e,e'}$ with $e\neq e'$ has codimension $q=4$. We can therefore conclude that $u\chi_\epsilon\to u$ in $H^2$ for every $u\in C^\infty_c$ and for a smooth $\chi_\epsilon$ that is zero near $\mathcal{N}_{12}^{(3)}$. For $\psi_n\in C^\infty_{c,-}$ converging towards $ \psi\in H^2_-$, we define, similarly to what was done for the $H^1$ case
\begin{equation}
    \phi_{\alpha,n} = \chi_{\frac{1}{m(n)}} \;\frac{[\mathcal{D}_\alpha^{(3)}]^* \,\psi_n}{|\mathcal{D}_1^{(3)}|^2+|\mathcal{D}_2^{(3)}|^2}\in C^\infty_{c,+},
\end{equation}
which is a smooth function as it is identically zero by construction near the problematic set $\mathcal{D}_1^{(3)}=\mathcal{D}_2^{(3)}=0$. This concludes the presentation of the proof of Theorem~\ref{thm:main}.

\textit{Extension to the spinful case.} We consider a fixed number $N_\sigma$ of spin-$\sigma$ particles, for $\sigma\in \{\uparrow,\downarrow\}$. The fermionic wave function is antisymmetric in the coordinates for the spin-up and spin-down particle separately. Accordingly, it is sufficient to define the determinants as the products of determinants for the spin-up and the spin-down particles separately
\begin{equation}
    \mathcal{D}_\alpha^{(d)}(\bm R_\uparrow,\bm R_{\downarrow})=    \mathcal{D}_\alpha^{(d)}(\bm R_\uparrow)    \;\mathcal{D}_\alpha^{(d)}(\bm R_{\downarrow}),
\end{equation}
where $\bm R_\sigma =(\bm r_{1,\sigma},\dots,\bm r_{N_\sigma,\sigma})$, and the determinant count $K_s^{(d)}$ is identical to the spinless case.

\textit{Discussion.}---Theorem~\ref{thm:main} provides a representation-theoretic baseline for continuum fermionic variational calculations: at the level of the variational energy, a single fixed Vandermonde determinant multiplied by a smooth bosonic factor already spans the whole fermionic $H^1$ space; controlling the local energy---hence the Monte Carlo variance---requires at most one additional determinant, and only in three dimensions. The number of explicitly antisymmetric objects is therefore not a source of expressiveness for neural quantum states: a single-determinant generalized Slater--Jastrow ansatz with a sufficiently flexible bosonic network is variationally complete, and additional determinants or Pfaffian factors should be understood as improving the conditioning of the bosonic factor near the artificial nodal surfaces rather than enlarging the reachable state space. We note however that the theorem is silent on optimization: the exact factors $\phi_{\alpha,n}$ must reconstruct the true nodal structure of the wave function, and quantifying the difficulty of this task---together with approximation rates as a function of $N$---remains an open problem.

\begin{figure}[t]
\centering
\includegraphics[width=0.99\columnwidth]{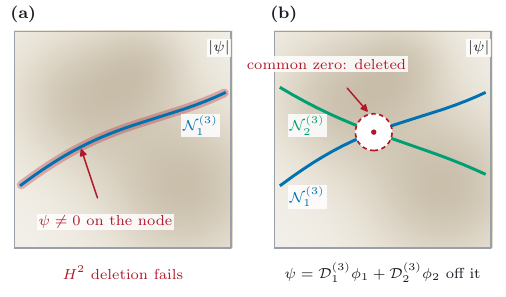}
\caption{\textit{Why two fixed determinants suffice in $H^2$ for $d=3$}. The background shade intensity represents $|\psi|$ for a generic smooth target in a two-dimensional slice of configuration space. (a) A single complex determinant $\mathcal{D}_1^{(3)}$ has an \emph{off-collision} nodal set of codimension two on which a generic $\psi\in H^2$ does not vanish, so the deletion lemma cannot be applied. 
(b) The nodal surface of a  second complex determinant $\mathcal{D}_2^{(3)}$, $\mathcal{N}_2^{(3)}$, shares with $\mathcal{N}_1^{(3)}$ only a codimension-three collision nodal surface and a codimension-four set; both are deleted by the  cutoff at vanishing $H^2$ cost (dashed disk); we can the safely approximate $\psi\underset{H^2}{\approx}\mathcal{D}_1^{(3)}\phi_1+\mathcal{D}_2^{(3)}\phi_2$ with smooth bosonic $\phi_\alpha$. This schematic was produced with \textsc{Matplotlib} code written by Claude Fable~5 (Anthropic).}
\label{fig:mechanism}
\end{figure}

\begin{acknowledgments}
The proof of the Theorem presented in this work was developed by ChatGPT~5.6 (Pro and Sol) and Claude Opus~4.8 after extensive interactions with the Authors, and part of the manuscript was drafted and edited with help from ChatGPT~5.6, Claude Opus~4.8, and Claude Fable~5. The Authors have curated the proof and the presentation, and carefully checked every result. Accordingly, the Authors declare full responsibility on the presented material. This research was supported by SEFRI through Grant No. MB22.00051 (NEQS - Neural Quantum Simulation).
\end{acknowledgments}

\bibliography{one_determinant}

\onecolumngrid
\vspace{0.5em}
\begin{center}
\rule{0.45\textwidth}{0.4pt}\\[0.4em]
{\large\textbf{End Matter}}\\[-0.2em]
\rule{0.45\textwidth}{0.4pt}
\end{center}
\vspace{0.3em}
\twocolumngrid

\setcounter{equation}{0}
\renewcommand{\theequation}{EM\arabic{equation}}

\textit{Use of AI tools.}---The constructions and their proofs---the single-determinant $H^1$ completeness argument, the logarithmic-deletion proof of the fixed two-determinant $H^2$ theorem variant, and the supporting codimension---were developed through extended technical exchange with ChatGPT~5.6 (Pro and Sol) and Claude Opus~4.8. The same tools implemented the numerical cross-checks of both theorems (the capacity-cost, incidence, and logarithmic-deletion tests) and ran a multi-agent adversarial audit that re-derived the load-bearing steps and independently verified the bibliography; the schematic figures were generated with \textsc{Matplotlib} code written by Claude Fable~5, as disclosed in the figure captions. They, together with Claude Fable~5, also assisted in drafting and revising the text. The Authors checked and curated every definition, statement, proof, and numerical result, and take full responsibility for the content of the paper.

\onecolumngrid

\clearpage

\begin{center}

{\large\bfseries
Supplemental Material for
``\articletitle''
\par}

\vspace{1.2em}

{\normalsize
Giuseppe Carleo$^{1,2}$ and Riccardo Rossi$^{1,2,3}$
\par}

\vspace{0.7em}

{\small
$^{1}$Institute of Physics, \'Ecole Polytechnique F\'ed\'erale de Lausanne (EPFL), CH-1015 Lausanne, Switzerland\\
$^{2}$Center for Quantum Science and Engineering, \'Ecole Polytechnique F\'ed\'erale de Lausanne (EPFL), CH-1015 Lausanne, Switzerland\\
$^3$CNRS, Laboratoire de Physique Th\'eorique de la Mati\`ere Condens\'ee, Sorbonne Universit\'e, 75005 Paris, France
\par}

\vspace{0.5em}

\end{center}

\vspace{1.5em}

\setcounter{secnumdepth}{2}

\makeatletter

\renewcommand{\section}{
  \@startsection{section}{1}{\z@}
    {2.2ex \@plus .8ex \@minus .2ex}
    {0.9ex \@plus .2ex}
    {\normalfont\normalsize\bfseries\raggedright}
}

\renewcommand{\subsection}{
  \@startsection{subsection}{2}{\z@}
    {1.8ex \@plus .6ex \@minus .2ex}
    {0.7ex \@plus .2ex}
    {\normalfont\small\bfseries\raggedright}
}

\let\@hangfrom@section\@hang@from
\let\@hangfrom@subsection\@hang@from

\renewcommand{\@seccntformat}[1]{
  \csname the#1\endcsname.\hspace{0.4em}
}

\makeatother

\setcounter{theorem}{0}

\setcounter{section}{0}
\renewcommand{\thesection}{S\arabic{section}}

\setcounter{subsection}{0}
\renewcommand{\thesubsection}{S\arabic{section}.\arabic{subsection}}

\setcounter{equation}{0}
\renewcommand{\theequation}{S\arabic{equation}}

\setcounter{figure}{0}
\renewcommand{\thefigure}{S\arabic{figure}}

\setcounter{table}{0}
\renewcommand{\thetable}{S\arabic{table}}

This document contains the proof of the following Theorem, which was presented in the main text of the manuscript, whose statement we reproduce below.

\begin{theorem}\label{thm:main}
Let $s\in\{1,2\}$. If $\psi\in H^s_-(\mathbb{R}^{d\times N})$ is a spinless fermionic wave function of $N$ particles in $d$ spatial dimensions, it exists a finite set of sequences of smooth compactly-supported bosonic $N$-particle wave functions $\phi_{\alpha,n}\in C^\infty_{c,+}(\mathbb{R}^{d\times N})$ such that
\begin{equation}
\lim_{n\to\infty} \left\lVert\psi - \sum_{\alpha=1}^{K_s^{(d)}}\mathcal{D}_{\alpha}^{(d)} \,\phi_{\alpha,n}\right\rVert_{H^s}=0,
\end{equation}
where $\mathcal{D}_{\alpha}^{(d)}$ are universal, $\psi$-independent, antisymmetric polynomials in the particle coordinates. We show that $K_1^{(d)}=1$ for $d\in\{1,2,3\}$, $K_2^{(d)} = 1$ for $d\in\{1,2\}$, and $K_2^{(3)}\le 2$.
\end{theorem}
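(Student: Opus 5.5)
The proof proceeds by density plus deletion of the artificial nodal set, following the outline of the main text. First, $C^\infty_{c,-}$ is dense in $H^s_-$. Take $C^\infty_c$ approximants of $\psi$ in $H^s$ (Aronszajn--Smith) and antisymmetrize them with the projector $P_-=\frac1{N!}\sum_{\pi}\sgn(\pi)\pi$. This projector is an $H^s$-contraction commuting with derivatives and fixes $\psi$. It therefore suffices to approximate a fixed $u\in C^\infty_{c,-}$ by $\sum_\alpha\mathcal{D}_\alpha^{(d)}\phi_\alpha$ with $\phi_\alpha\in C^\infty_{c,+}$, and to conclude by a diagonal argument.

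For $d=1$ I would use a Hadamard/Taylor division. If $u$ is smooth and vanishes on $\{x_j=x_k\}$, then $u=(x_k-x_j)\int_0^1\partial(\cdot)\,dt$ is again smooth. Iterating over pairs, possible because the hyperplanes are distinct and antisymmetry gives vanishing on each, shows that $u/\mathcal{D}_1^{(1)}$ is smooth, compactly supported and symmetric. The representation is then exact.

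For $d\ge2$, pick a symmetric cutoff $\chi_\epsilon=\theta(\log\tau_\epsilon)$ as in the text, with $\rho$ a smoothed distance to the relevant bad set $B$. I would take $\rho$ as a symmetric smooth function comparable to $\min_e \dist(\cdot,B_e)$, with $|\nabla^k\rho|\lesssim\rho^{1-k}$, built for instance via a regularized distance. Set $\phi_1=\chi_\epsilon u/\mathcal{D}_1^{(d)}$ for $s=1$, or $d=2$. In $d=3$, $s=2$, set $\phi_\alpha=\chi_\epsilon\,\overline{\mathcal{D}_\alpha}u/(|\mathcal{D}_1|^2+|\mathcal{D}_2|^2)$. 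These are smooth because the denominator is bounded below on $\operatorname{supp}\chi_\epsilon\cap\operatorname{supp}u$, and symmetric because $\overline{\mathcal{D}_\alpha}$ and $u$ are both antisymmetric while the denominator is symmetric. The identity $\sum_\alpha\mathcal{D}_\alpha\phi_\alpha=\chi_\epsilon u$ reduces everything to the deletion lemma: $\|u-\chi_\epsilon u\|_{H^s}\to0$.

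The deletion lemma follows from Leibniz and the codimension integral. In $H^1$ the only term is $\|u\|_\infty\|\nabla\chi_\epsilon\|_{L^2(S)}$. Working in a tubular neighbourhood of each linear piece $B_e$ of codimension $q\ge2$ (away from intersections the pieces are separated; near intersections one uses $\rho\le\dist(\cdot,B_e)$ for some $e$), this is bounded by $(\log\epsilon^{-1})^{-2}\int_{\epsilon^2}^\epsilon\rho^{q-3}d\rho$. For $q=2$ this is $\lesssim(\log\epsilon^{-1})^{-1}\to0$, and it is smaller for larger $q$. In $H^2$ the terms are $u\,\nabla^2\chi_\epsilon$ and $\nabla u\cdot\nabla\chi_\epsilon$. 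The second is handled as in $H^1$. The first needs $\int |u|^2|\chi_\epsilon''|^2\rho^{q-1}d\rho$. If $q\ge4$ this is $\lesssim(\log\epsilon^{-1})^{-2}\int\rho^{q-5}\to0$. If instead $u$ vanishes on $B_e$ (collision set), then $|u|\lesssim\rho$, which gains $\rho^2$ and makes $q\ge2$ sufficient.

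The main obstacle is the bookkeeping of the bad set in each case. For $d=2$, $B=\mathcal{N}_1^{(2)}$ is the collision set, so $u=0$ there and $q=2$ suffices. For $d=3$, $B=\mathcal{N}_1\cap\mathcal{N}_2=\bigcup L_{e,e'}$. Here $L_{e,e}$ is the collision set, of codimension $3$, on which $u$ vanishes. For $e\ne e'$, $L_{e,e'}$ has codimension $4$: if the pairs are disjoint, $2+2$; if they share a particle, $\zeta_j=\zeta_k$ and $\xi_k=\xi_l$ are four independent real equations. For $s=1$, $d=3$, $\mathcal{N}_1^{(3)}$ itself has codimension $2$, which suffices. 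The delicate point I expect is that $u$ vanishes only on the collision pieces of $B$, while $\rho$ measures distance to all of $B$. Near $L_{e,e}\cap L_{e',e''}$ one must check that $|u|\lesssim\dist(\cdot,L_{e,e})$ is compatible with the $q=4$ estimate for the other pieces. I would handle this by splitting the integration region according to which piece realizes the minimum in $\rho$, so that on each region one either has codimension $\ge4$ or the vanishing of $u$.
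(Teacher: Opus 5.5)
Your proposal is correct. It agrees with the paper's proof everywhere except one step: the density reduction via $P_-$, the Hadamard division for $d=1$, the logarithmic cutoff on a regularized distance with $|D\rho|\le C$, $|D^2\rho|\le C/\rho$ (the paper takes $\rho=(\sum_\nu\delta_\nu^{-2})^{-1/2}$, which is automatically symmetric), and the $H^1$ and $d=2$, $H^2$ cases are the same.

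The difference is the $H^2$ deletion of $Z_{12}=Z_3\cup Z_4$ for $d=3$. There the paper never builds a cutoff adapted to all of $Z_{12}$. It first cuts off the collision pieces $Z_3$ alone, using part (iii) of its deletion lemma, to get $g_n\to\Psi$ in $H^2$ with $g_n=0$ near $Z_3$. Then, for each fixed $n$, it cuts off the codimension-four set $Z_4$ from $g_n$ using part (ii), choosing the second parameter so that $\|\Psi_n-g_n\|_{H^2}<1/n$. The lemma is used as a black box, and the interaction near $L_{e,e}\cap L_{e',e''}$ never has to be faced.

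You instead use a single cutoff adapted to the whole of $Z_{12}$, as in the main-text sketch, so you must supply the mixed estimate yourself. Your splitting does this:
\begin{itemize}
\item Where a collision piece realizes $\delta=\min_\nu\delta_\nu$, you have $|u|\lesssim\dist(\cdot,Z_3)=\delta\le\sqrt J\rho$. Hence $|u\,\nabla^2\chi_\varepsilon|\lesssim 1/(L_\varepsilon\rho)$, and this region contributes $O(\varepsilon L_\varepsilon^{-2})$ to $\|u\nabla^2\chi_\varepsilon\|_{L^2}^2$ by the codimension-three tube bound.
\item Where some $L_{e,f}$ with $e\neq f$ realizes it, that part of $\{\rho<t\}$ lies in a codimension-four tube of radius $\sqrt J\,t$. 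So $\int\rho^{-4}$ over that part of $\{\varepsilon^2<\rho<\varepsilon\}$ is $O(L_\varepsilon)$, and the contribution is $O(L_\varepsilon^{-1})$.
\end{itemize}
Your route gives a single one-parameter cutoff and avoids the second parameter choice, at the cost of this bookkeeping. The paper's route is more modular.

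One small correction to your $H^1$ estimate: what controls $\int\rho^{-2}$ is the lower comparability $\rho\ge\delta/\sqrt J$ (equivalently $|K\cap\{\rho<t\}|\le Ct^q$), not $\rho\le\dist(\cdot,B_e)$. The same lower bound is what makes both halves of your split work.
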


\section{Statement and notation}

Set
\[
  \mathcal X_{d,N}:=(\R^d)^N\cong\R^{dN},
  \qquad
  X=(x_1,\ldots,x_N).
\]
For $\sigma\in\SN$, define
\[
  \sigma X=(x_{\sigma^{-1}(1)},\ldots,x_{\sigma^{-1}(N)}).
\]
The symmetric and antisymmetric Sobolev spaces are
\begin{align*}
 H^s_+(\mathcal X_{d,N})
 &=\{\phi\in H^s:\phi(\sigma X)=\phi(X)\text{ for every }\sigma\in\SN\},\\
 H^s_-(\mathcal X_{d,N})
 &=\{\psi\in H^s:\psi(\sigma X)=\sgn(\sigma)\psi(X)\text{ for every }\sigma\in\SN\}.
\end{align*}
All functions are complex-valued, as is natural for Schroedinger wave functions.

For a real-linear map $p:\R^d\to\C$, define the projected Vandermonde
\begin{equation}\label{eq:projected-vandermonde}
  \Delta_p(X)
  :=\prod_{1\leq i<j\leq N}\bigl(p(x_i)-p(x_j)\bigr).
\end{equation}
Since a permutation only reorders the factors with the usual Vandermonde sign,
\begin{equation}\label{eq:antisymmetry-delta}
  \Delta_p(\sigma X)=\sgn(\sigma)\Delta_p(X).
\end{equation}

We use the following fixed choices.

\begin{itemize}
\item If $d=1$,
\[
  \mathcal{D}_1^{(1)}(X)=\Delta(X):=\prod_{i<j}(x_i-x_j).
\]
\item If $d\geq 2$,
\[
  p_1(x)=x^{(1)}+i x^{(2)},
  \qquad \mathcal{D}_1^{(d)}=\Delta_{p_1}.
\]
\item If $d=3$,
\[
  p_2(x)=x^{(2)}+i x^{(3)},
  \qquad \mathcal{D}_2^{(3)}=\Delta_{p_2}.
\]
\end{itemize}

The proof of the theorem is straightforward once the deletion lemma is proven in Sec.~\ref{sec:lemma}.

\section{Reduction to smooth compactly-supported fermionic functions}

Define the antisymmetrizer
\begin{equation}\label{eq:antisymmetrizer}
  (P_-u)(X)
  :=\frac1{N!}\sum_{\sigma\in\SN}
  \sgn(\sigma)u(\sigma^{-1}X).
\end{equation}
Every permutation acts isometrically on $H^s(\mathcal X_{d,N})$, hence
$P_-$ is a bounded projection on $H^s$.

\begin{lemma}\label{lem:smooth-core}
For every integer $s\geq0$,
\[
 C_{c,-}^\infty(\mathcal X_{d,N})
 :=C_c^\infty(\mathcal X_{d,N})\cap H^s_-(\mathcal X_{d,N})
\]
is dense in $H^s_-(\mathcal X_{d,N})$.
\end{lemma}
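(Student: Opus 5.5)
The plan is to combine the standard density of $C_c^\infty(\mathcal X_{d,N})$ in $H^s(\mathcal X_{d,N})$ with the antisymmetrizer $P_-$ of Eq.~\eqref{eq:antisymmetrizer}. Fix $\psi\in H^s_-$ and $\varepsilon>0$. By the result quoted in the main text~\cite{AronszajnSmith1961}, I will choose $u\in C_c^\infty(\mathcal X_{d,N})$ with $\|\psi-u\|_{H^s}<\varepsilon$. The candidate approximant is then $P_-u$.

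First, I will check that $P_-u\in C_{c,-}^\infty$. Each map $u\mapsto u(\sigma^{-1}\,\cdot)$ is a linear coordinate permutation. It therefore preserves smoothness and maps compact support to compact support. A finite sum of such functions is again in $C_c^\infty$. Antisymmetry follows from reindexing the sum: with $(\sigma X)$ defined as in the notation section, $(P_-u)(\tau X)=\sgn(\tau)(P_-u)(X)$, using $\sgn(\sigma\tau)=\sgn(\sigma)\sgn(\tau)$. This is a one-line bookkeeping check of the group action convention.

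Second, I will show that $P_-\psi=\psi$ for $\psi\in H^s_-$, and that $P_-$ is bounded on $H^s$ with norm at most $1$. Permutations of the particle labels act on $\mathcal X_{d,N}$ as orthogonal linear maps that permute coordinate blocks. They therefore preserve Lebesgue measure and map each weak derivative $\partial^\beta u$ to a permuted derivative $(\partial^{\beta'}u)\circ\sigma^{-1}$ with $|\beta'|=|\beta|$. Summing over all multi-indices of fixed order gives $\|u\circ\sigma^{-1}\|_{H^s}=\|u\|_{H^s}$. The triangle inequality then gives $\|P_-u\|_{H^s}\le\|u\|_{H^s}$. The identity $P_-\psi=\psi$ holds because each term of the sum equals $\sgn(\sigma)^2\psi/N!$ almost everywhere.

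Finally, these two facts give
\[
\|\psi-P_-u\|_{H^s}=\|P_-(\psi-u)\|_{H^s}\le\|\psi-u\|_{H^s}<\varepsilon,
\]
which proves the density claim. The only point needing care, and hence the main obstacle, is the invariance of the $H^s$ norm under permutations. Here one must make sure the chosen $H^s$ norm is permutation invariant: for example, the sum over all $|\beta|\le s$ of $\|\partial^\beta u\|_2^2$, or equivalently the Fourier-side weight $(1+|k|^2)^s$, which is invariant under orthogonal maps. If a different but equivalent norm were used, the same argument still yields density with a constant $C$ in place of $1$, which suffices.
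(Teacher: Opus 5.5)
Your proposal is correct and follows essentially the same route as the paper. Both approximate $\psi$ by $u\in C_c^\infty$, apply the antisymmetrizer $P_-$, and use $P_-\psi=\psi$ together with the boundedness of $P_-$ coming from the isometric action of permutations on $H^s$; you simply spell out the antisymmetry bookkeeping and the permutation invariance of the $H^s$ norm, which the paper states in one line.
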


\begin{proof}
Take $\psi\in H^s_-$.  Since $C_c^\infty(\mathcal X_{d,N})$ is dense in
$H^s(\mathcal X_{d,N})$, choose $u_n\in C_c^\infty$ with
$u_n\to\psi$ in $H^s$.  Then $P_-u_n\in C_{c,-}^\infty$ and
\[
 \|P_-u_n-\psi\|_{H^s}
 =\|P_-(u_n-\psi)\|_{H^s}
 \leq \|P_-\|\,\|u_n-\psi\|_{H^s}\longrightarrow0,
\]
because $P_-\psi=\psi$.
\end{proof}

It is therefore enough to start with a fixed
$f\in C_{c,-}^\infty$ and approximate this $f$ by the required factorizations.

\section{The logarithmic cutoff lemma}\label{sec:lemma}

The nodal sets of the projected Vandermonde determinants are finite unions of linear subspaces.
At the critical Sobolev codimension, to be introduced below, an ordinary cutoff is too expensive. We show instead that a logarithmic cutoff has vanishing Sobolev cost.

\subsection{A regularized distance to a finite union of subspaces}

Let
\[
  Z=\bigcup_{\nu=1}^J L_\nu\subset\R^M,
\]
where each $L_\nu$ is a linear subspace.  Put
\[
  \delta_\nu(X):=\dist(X,L_\nu),
  \qquad
  \delta(X):=\dist(X,Z)=\min_\nu\delta_\nu(X).
\]
For $X\notin Z$, define
\begin{equation}\label{eq:regularized-distance}
  \rho(X)
  :=\left(\sum_{\nu=1}^J\delta_\nu(X)^{-2}\right)^{-1/2}.
\end{equation}
Then
\begin{equation}\label{eq:rho-comparable}
  \frac{\delta(X)}{\sqrt J}\leq \rho(X)\leq\delta(X).
\end{equation}
Indeed, if $\delta=\min_\nu\delta_\nu$, then
\[
  \delta^{-2}
  \leq\sum_\nu\delta_\nu^{-2}
  \leq J\delta^{-2},
\]
and taking inverse square roots gives \eqref{eq:rho-comparable}.

Let $P_\nu$ be the orthogonal projection onto $L_\nu^\perp$.  Then
$\delta_\nu(X)^2=|P_\nu X|^2$.  Differentiating
\eqref{eq:regularized-distance} gives the pointwise estimates
\begin{equation}\label{eq:rho-derivatives}
  |D\rho(X)|\leq C,
  \qquad
  |D^2\rho(X)|\leq \frac{C}{\rho(X)},
  \qquad X\notin Z.
\end{equation}
For completeness, write
\[
 R(X):=\sum_\nu\delta_\nu(X)^{-2},
 \qquad \rho=R^{-1/2}.
\]
Since
\[
 |D(\delta_\nu^{-2})|\leq C\delta_\nu^{-3},
 \qquad
 |D^2(\delta_\nu^{-2})|\leq C\delta_\nu^{-4},
\]
and $\delta_\nu\geq\rho$, one obtains
\[
 |DR|\leq C\rho^{-3},
 \qquad
 |D^2R|\leq C\rho^{-4}.
\]
Now
\[
 D\rho=-\frac12R^{-3/2}DR,
\]
and
\[
 D^2\rho
 =\frac34R^{-5/2}DR\otimes DR
 -\frac12R^{-3/2}D^2R.
\]
Using $R=\rho^{-2}$ gives \eqref{eq:rho-derivatives}.

Suppose that every $L_\nu$ has codimension at least $q$.  On every fixed compact set
$K\subset\R^M$, the tubular-neighborhood estimate
\begin{equation}\label{eq:tube-volume}
  \bigl|K\cap\{\rho<t\}\bigr|\leq C_K t^q,
  \qquad 0<t<1,
\end{equation}
holds.  To see this, use \eqref{eq:rho-comparable}:
\[
 \{\rho<t\}\subset\{\delta<\sqrt Jt\}
 \subset\bigcup_{\nu=1}^J\{\delta_\nu<\sqrt Jt\}.
\]
A tube of radius $t$ around a codimension-$q_\nu$ linear subspace has volume
$O(t^{q_\nu})$ inside a fixed compact set, and $q_\nu\geq q$.

A consequence used repeatedly below is the weighted estimate
\begin{equation}\label{eq:weighted-integral}
 I_a(\varepsilon)
 :=\int_{K\cap\{\varepsilon^2<\rho<\varepsilon\}}\rho^{-a}\,dX
 \leq
 \begin{cases}
 C\varepsilon^{q-a},&a<q,\\
 C|\log\varepsilon|,&a=q.
 \end{cases}
\end{equation}
One derivation is by dyadic shells.  On
\[
 2^{-k-1}\varepsilon<\rho\leq2^{-k}\varepsilon
\]
we have $\rho^{-a}\leq C(2^k/\varepsilon)^a$, whereas
\eqref{eq:tube-volume} bounds the shell volume by
$C(2^{-k}\varepsilon)^q$.  Thus the $k$th contribution is bounded by
\[
 C\varepsilon^{q-a}2^{-k(q-a)}.
\]
The geometric series converges when $a<q$.  When $a=q$, each shell contributes at most a constant, and the number of shells between $\varepsilon^2$ and $\varepsilon$ is
$O(|\log\varepsilon|)$.

\subsection{The cutoff and its derivatives}

Fix $\theta\in C^\infty(\R)$ such that
\[
 0\leq\theta\leq1,
 \qquad
 \theta(t)=0\text{ for }t\leq0,
 \qquad
 \theta(t)=1\text{ for }t\geq1.
\]
For $0<\varepsilon<e^{-1}$ and $L_\varepsilon:=|\log\varepsilon|$, define
\begin{equation}\label{eq:eta-epsilon}
 \eta_\varepsilon(t)
 :=\theta\left(\frac{\log(t/\varepsilon^2)}{L_\varepsilon}\right),
 \qquad t>0,
\end{equation}
and set
\begin{equation}\label{eq:chi-epsilon}
 \chi_\varepsilon(X):=\eta_\varepsilon(\rho(X)).
\end{equation}
Then
\[
 \chi_\varepsilon=0\quad\text{when }\rho\leq\varepsilon^2,
 \qquad
 \chi_\varepsilon=1\quad\text{when }\rho\geq\varepsilon.
\]
Although $\rho$ need not be smooth on $Z$, the function $\chi_\varepsilon$ is globally smooth, because it is identically zero in a neighborhood of $Z$.

Differentiating \eqref{eq:eta-epsilon} gives, on the transition region,
\[
 |\eta_\varepsilon'(t)|\leq\frac{C}{L_\varepsilon t},
 \qquad
 |\eta_\varepsilon''(t)|\leq\frac{C}{L_\varepsilon t^2}.
\]
Combining these estimates with \eqref{eq:rho-derivatives} yields
\begin{equation}\label{eq:chi-derivatives}
 |\nabla\chi_\varepsilon|
 \leq\frac{C}{L_\varepsilon\rho},
 \qquad
 |\nabla\otimes\nabla\chi_\varepsilon|
 \leq\frac{C}{L_\varepsilon\rho^2}.
\end{equation}

\begin{lemma}[Logarithmic deletion lemma]\label{lem:deletion}
Let $Z$ be a finite union of linear subspaces of $\R^M$, each of codimension at least $q$, and let $u\in C_c^\infty(\R^M)$.

\begin{enumerate}
\item If $q\geq2$, then
\[
 u\chi_\varepsilon\longrightarrow u\quad\text{in }H^1(\R^M).
\]
\item If $q\geq4$, then
\[
 u\chi_\varepsilon\longrightarrow u\quad\text{in }H^2(\R^M).
\]
\item If $q\geq2$ and $u=0$ on $Z$, then
\[
 u\chi_\varepsilon\longrightarrow u\quad\text{in }H^2(\R^M).
\]
\end{enumerate}
If $Z$ is invariant under particle permutations, then $\chi_\varepsilon$ may be chosen symmetric.
\end{lemma}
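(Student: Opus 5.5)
The plan is to write $u-u\chi_\varepsilon=u(1-\chi_\varepsilon)$ and expand its derivatives with the Leibniz rule. Each term is then controlled either by dominated convergence or by the derivative bounds \eqref{eq:chi-derivatives} combined with the weighted integrals \eqref{eq:weighted-integral}. Throughout, $K$ is the support of $u$. The derivatives of $\chi_\varepsilon$ are supported in the shell $\{\varepsilon^2<\rho<\varepsilon\}$, so every integral involving them reduces to some $I_a(\varepsilon)$ on $K$.

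\textbf{Zeroth-order terms.} Since $Z$ has measure zero and $0\le 1-\chi_\varepsilon\le 1$ with $1-\chi_\varepsilon\to0$ off $Z$, dominated convergence gives $\|D^k u\,(1-\chi_\varepsilon)\|_2\to0$ for $k=0,1,2$.

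\textbf{First-order term.} The remaining first-order piece is $u\nabla\chi_\varepsilon$, and
\[
\|u\nabla\chi_\varepsilon\|_2^2\le \|u\|_\infty^2\,\frac{C}{L_\varepsilon^2}\,I_2(\varepsilon).
\]
If $q\ge2$, then $I_2\le C|\log\varepsilon|$ when $q=2$ and $I_2\le C$ when $q>2$. In either case the right side is $O(L_\varepsilon^{-1})\to0$, which proves part~1.

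\textbf{Second-order terms.} For $H^2$ we must also control
\[
D^2(u\chi_\varepsilon)-\chi_\varepsilon D^2u=2\,Du\otimes D\chi_\varepsilon+u\,D^2\chi_\varepsilon .
\]

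For part~2 ($q\ge4$):
\begin{itemize}
\item The cross term is bounded by $\|Du\|_\infty^2 L_\varepsilon^{-2}I_2(\varepsilon)\le C\varepsilon^{q-2}L_\varepsilon^{-2}\to0$.
\item The Hessian term is bounded by $\|u\|_\infty^2 L_\varepsilon^{-2}I_4(\varepsilon)$, and $I_4\le C|\log\varepsilon|$ at $q=4$ (bounded if $q>4$), giving $O(L_\varepsilon^{-1})\to0$.
\end{itemize}

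For part~3 ($q\ge2$, $u|_Z=0$), the key extra input is the pointwise bound $|u(X)|\le \|Du\|_\infty\,\delta(X)\le \sqrt J\,\|Du\|_\infty\,\rho(X)$. This holds because the nearest point of $Z$ lies in some $L_\nu$ where $u$ vanishes, and we apply the mean value theorem along the segment to it. Then:
\begin{itemize}
\item $|u\,D^2\chi_\varepsilon|\le C\rho\cdot L_\varepsilon^{-1}\rho^{-2}$, so its squared norm is at most $CL_\varepsilon^{-2}I_2(\varepsilon)=O(L_\varepsilon^{-1})$.
\item The cross term is bounded by $CL_\varepsilon^{-2}I_2(\varepsilon)$ as in part~1.
\end{itemize}
Both tend to zero. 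The vanishing of $u$ on $Z$ thus effectively lowers the required codimension by two, which is exactly the mechanism the main text invokes.

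\textbf{Symmetry and the main obstacle.} If $Z$ is permutation invariant, then $\sigma$ permutes the subspaces $L_\nu$ isometrically. Hence $\sum_\nu\delta_\nu^{-2}$ is invariant, so $\rho$ and $\chi_\varepsilon$ are symmetric. The main obstacle is the critical cases $a=q$, where only the $|\log\varepsilon|$ growth of $I_q$ is available. The logarithmic profile is designed precisely so that its $L_\varepsilon^{-2}$ prefactor beats this growth. The other delicate point is the Lipschitz bound $|u|\lesssim\rho$ in part~3, which relies on \eqref{eq:rho-comparable}.
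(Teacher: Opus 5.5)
Your proposal is correct and follows essentially the same route as the paper. You expand $u(1-\chi_\varepsilon)$ by the Leibniz rule, use the derivative bounds \eqref{eq:chi-derivatives} together with the weighted integrals $I_2$ and $I_4$, use the Lipschitz bound $|u|\le C\sqrt J\,\rho$ to gain a factor of $\rho$ in part 3, and get symmetry from the permutation invariance of $\sum_\nu\delta_\nu^{-2}$. The only cosmetic difference is that you handle the zeroth-order terms by dominated convergence, where the paper uses the tube-volume estimate \eqref{eq:tube-volume}.
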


\begin{proof}
Let
\[
 w_\varepsilon:=u-u\chi_\varepsilon=u(1-\chi_\varepsilon).
\]
All terms without derivatives of $\chi_\varepsilon$ are supported in
$\{\rho<\varepsilon\}$ and converge to zero in $L^2$ by
\eqref{eq:tube-volume}.  We only estimate the terms containing derivatives of
$\chi_\varepsilon$.

For $H^1$,
\[
 \nabla w_\varepsilon
 =(1-\chi_\varepsilon)Du-u\nabla\chi_\varepsilon.
\]
Using \eqref{eq:chi-derivatives} and \eqref{eq:weighted-integral},
\[
 \|u\nabla\chi_\varepsilon\|_{L^2}^2
 \leq\frac{C}{L_\varepsilon^2}
 \int_{\{\varepsilon^2<\rho<\varepsilon\}}\rho^{-2}\,dX.
\]
If $q=2$, the right-hand side is $O(L_\varepsilon^{-1})$.
If $q>2$, it is $O(\varepsilon^{q-2}L_\varepsilon^{-2})$.
This proves part (i).

For $H^2$,
\begin{equation}\label{eq:second-derivative-product}
 \nabla\otimes\nabla w_\varepsilon
 =(1-\chi_\varepsilon)\nabla\otimes\nabla u
 -2\,\operatorname{sym}(\nabla u\otimes \nabla\chi_\varepsilon)
 -u\nabla\otimes\nabla\chi_\varepsilon.
\end{equation}
The middle term satisfies
\[
 \|\nabla u\otimes \nabla\chi_\varepsilon\|_{L^2}^2
 \leq\frac{C}{L_\varepsilon^2}I_2(\varepsilon).
\]
If $q\geq4$, this tends to zero.  The last term satisfies
\[
 \|u\nabla\otimes\nabla\chi_\varepsilon\|_{L^2}^2
 \leq\frac{C}{L_\varepsilon^2}I_4(\varepsilon).
\]
For $q=4$, this is $O(L_\varepsilon^{-1})$; for $q>4$, it is
$O(\varepsilon^{q-4}L_\varepsilon^{-2})$.  This proves part (ii).

Now suppose $u=0$ on $Z$.  Since $u$ is smooth and compactly supported, it is Lipschitz.  Therefore
\[
 |u(X)|\leq C\dist(X,Z)\leq C\sqrt J\,\rho(X),
\]
where the last inequality follows from \eqref{eq:rho-comparable}.  Consequently,
\[
 |u\nabla\otimes\nabla\chi_\varepsilon|
 \leq\frac{C}{L_\varepsilon\rho},
\]
and hence
\[
 \|u\nabla\otimes\nabla\chi_\varepsilon\|_{L^2}^2
 \leq\frac{C}{L_\varepsilon^2}I_2(\varepsilon).
\]
The same bound holds for the middle term in
\eqref{eq:second-derivative-product}.  It tends to zero for every $q\geq2$,
with the critical value $q=2$ giving $O(L_\varepsilon^{-1})$.
This proves part (iii).

Finally, if the collection $\{L_\nu\}$ is permutation invariant, then the sum in
\eqref{eq:regularized-distance} is permutation invariant.  Thus $\rho$ and
$\chi_\varepsilon$ are symmetric.
\end{proof}

\section{The one-dimensional case: exact smooth divisibility}

Assume $d=1$ and set
\[
  \Delta(X)=\prod_{i<j}(x_i-x_j).
\]
In this case the factorization is exact on the smooth fermionic core.

\begin{proposition}\label{prop:d1}
For every $\Psi\in C_{c,-}^\infty(\R^N)$, there exists
$\Phi\in C_{c,+}^\infty(\R^N)$ such that
\[
  \Psi=\Delta \Phi.
\]
\end{proposition}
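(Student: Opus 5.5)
The plan is to divide by one linear factor at a time, using Hadamard's lemma in integral form, and to track smoothness, support and symmetry separately.

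\emph{Single factor.} Fix a pair $i<j$. Let $g\in C^\infty(\R^N)$ vanish on the hyperplane $\{x_i=x_j\}$. Then
\[
 g(X)=(x_i-x_j)\int_0^1 (\partial_i g)\bigl(X_t\bigr)\,dt ,
\]
where $X_t$ is obtained from $X$ by replacing $x_i$ with $x_j+t(x_i-x_j)$. Differentiating under the integral shows that the quotient $g/(x_i-x_j)$ extends to a $C^\infty$ function on all of $\R^N$.

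\emph{Vanishing and iteration.} First, $\Psi$ vanishes on every hyperplane $\{x_i=x_j\}$: applying the transposition $(ij)$ gives $\Psi(X)=-\Psi(X)$ at such points. Next, divide by the factors one at a time. Suppose $g$ is smooth and $\Psi=\bigl(\prod_{\text{pairs already divided}}(x_k-x_l)\bigr)\,g$. Take a new pair $(i,j)$.
\begin{itemize}
\item On $\{x_i=x_j\}$, with all other coordinates pairwise distinct and distinct from $x_i$, the prefactor is nonzero. So $g=0$ there.
\item This set is dense in the hyperplane $\{x_i=x_j\}$, and $g$ is continuous. Hence $g$ vanishes on the whole hyperplane.
\item Hadamard's lemma then divides $g$ by $x_i-x_j$ and keeps it smooth.
\end{itemize}
After all $\binom N2$ steps we obtain $\Phi\in C^\infty(\R^N)$ with $\Psi=\Delta\Phi$.

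\emph{Support and symmetry.} Off the zero set of $\Delta$, which has measure zero, we have $\Phi=\Psi/\Delta$. So $\Phi$ vanishes on the open set outside $\operatorname{supp}\Psi$, away from the diagonals. By continuity, $\Phi$ vanishes on the whole complement of $\operatorname{supp}\Psi$; hence $\operatorname{supp}\Phi\subseteq\operatorname{supp}\Psi$, which is compact. For symmetry, $\Psi$ and $\Delta$ both pick up the factor $\sgn(\sigma)$ under a permutation $\sigma$, by \eqref{eq:antisymmetry-delta}. So $\Phi(\sigma X)=\Phi(X)$ on the dense set $\{\Delta\neq0\}$, and by continuity everywhere. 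Thus $\Phi\in C^\infty_{c,+}$.

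\emph{Main obstacle.} The only delicate step is the iteration: each new quotient must still vanish on the next hyperplane. The density argument above handles this. Alternatively, one can use that the linear factors $x_i-x_j$ are pairwise non-associated, so that divisibility in the ring of smooth functions follows one factor at a time by continuity.
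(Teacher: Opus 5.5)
Your proof is correct and follows essentially the same route as the paper. The shared steps are: vanishing of $\Psi$ on each pair hyperplane by antisymmetry, Hadamard's integral factorization to divide by one factor $x_i-x_j$ at a time, continuity to show each partial quotient still vanishes on the remaining hyperplanes, and continuity off the dense set $\{\Delta\neq0\}$ to get symmetry and compact support; your explicit density argument for the iteration simply spells out what the paper states in one line.
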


\begin{proof}
Fix a pair $i<j$ and introduce
\[
 R=\frac{x_i+x_j}{2},
 \qquad r=x_i-x_j.
\]
Keeping all other variables fixed, antisymmetry under the transposition $(ij)$ gives
\[
 \Psi(R,r)=-\Psi(R,-r).
\]
In particular, $\Psi(R,0)=0$.  The fundamental theorem of calculus gives the smooth
Hadamard factorization
\begin{equation}\label{eq:hadamard}
 \Psi(R,r)
 =r\int_0^1 \partial_r \Psi(R,tr)\,dt.
\end{equation}
Thus $\Psi$ is smoothly divisible by $x_i-x_j$.

Apply this argument successively to all pair hyperplanes.  Dividing by one factor does not destroy vanishing on another pair hyperplane: away from the intersection of the two hyperplanes this is immediate, and continuity extends the vanishing to the intersection.  Therefore
\[
 \Phi:=\frac{\Psi}{\prod_{i<j}(x_i-x_j)}
\]
extends to a globally smooth function.

Away from the collision set,
\[
 \Phi(\sigma X)
 =\frac{\Psi(\sigma X)}{\Delta(\sigma X)}
 =\frac{\sgn(\sigma)\Psi(X)}{\sgn(\sigma)\Delta(X)}
 =\Phi(X).
\]
By continuity, $\Phi$ is symmetric everywhere.  Since $\Psi$ is compactly supported and
$\Phi$ vanishes outside the same compact set, $\Phi\in C_{c,+}^\infty$.
\end{proof}

This proves Theorem~\ref{thm:main} for $d=1$ in both $H^1$ and $H^2$.

\section{One factor in $H^1$ for every $d\geq2$}

Let
\[
 p_1(x)=x^{(1)}+ix^{(2)},
 \qquad
 \mathcal{D}_1^{(d)}(X)=\prod_{i<j}\bigl(p_1(x_i)-p_1(x_j)\bigr).
\]
Its zero set is
\begin{equation}\label{eq:Z1}
 Z_1
 =\bigcup_{i<j}L_{ij}^{(1)},
 \qquad
 L_{ij}^{(1)}:=\{X:p_1(x_i-x_j)=0\}.
\end{equation}
Each $L_{ij}^{(1)}$ is a real linear subspace of codimension two, because
$p_1(x_i-x_j)=0$ is the pair of real equations
\[
 x_i^{(1)}-x_j^{(1)}=0,
 \qquad
 x_i^{(2)}-x_j^{(2)}=0.
\]
The union $Z_1$ is permutation invariant.

Take $\Psi\in C_{c,-}^\infty$.  By Lemma~\ref{lem:deletion}(i), there are symmetric
cutoffs $\chi_n$ such that
\[
 \Psi_n:=\Psi\chi_n\longrightarrow \Psi\quad\text{in }H^1,
 \qquad
 \Psi_n=0\quad\text{in a neighborhood of }Z_1.
\]
Since $\Psi$ is antisymmetric and $\chi_n$ is symmetric, $\Psi_n$ is antisymmetric.
Define
\begin{equation}\label{eq:b-one-factor}
 \Phi_n(X)=
 \begin{cases}
 \Psi_n(X)/\mathcal{D}_1^{(d)}(X),&X\notin Z_1,\\
 0,&X\text{ in the neighborhood of }Z_1\text{ where }\Psi_n=0.
 \end{cases}
\end{equation}
This is globally smooth, because it is identically zero near every zero of $\mathcal{D}_1^{(d)}$.
It is compactly supported.  Moreover,
\[
 \Phi_n(\sigma X)
 =\frac{\sgn(\sigma)\Psi_n(X)}{\sgn(\sigma)\mathcal{D}_1^{(d)}(X)}
 =\Phi_n(X),
\]
so $\Phi_n$ is symmetric.  Finally,
\[
 \Psi_n=\mathcal{D}_1^{(d)}\Phi_n,
 \qquad
 \mathcal{D}_1^{(d)}\Phi_n\longrightarrow \Psi\quad\text{in }H^1.
\]
This proves the $H^1$ assertion.

\begin{remark}
The use of the complex projection $x^{(1)}+ix^{(2)}$ is essential.  A real one-coordinate Vandermonde has a codimension-one nodal set in $d>1$, and such a set cannot be deleted at arbitrarily small $H^1$ cost.
\end{remark}

\section{One factor in $H^2$ when $d=2$}

Now let $d=2$.  The map
\[
 p_1(x)=x^{(1)}+ix^{(2)}
\]
is injective as a real-linear map.  Therefore
\[
 p_1(x_i-x_j)=0\quad\Longleftrightarrow\quad x_i=x_j.
\]
Hence the zero set of $\mathcal{D}_1^{(2)}$ is exactly the collision set
\begin{equation}\label{eq:collision-set-d2}
 \mathcal C
 :=\bigcup_{i<j}\{X:x_i=x_j\}.
\end{equation}
Each pair-collision subspace has codimension $d=2$.

If $\Psi$ is antisymmetric, then $\Psi$ vanishes on $\mathcal C$.  Indeed, at a point with
$x_i=x_j$, the transposition $(ij)$ leaves the configuration unchanged, while
antisymmetry gives
\[
 \Psi(X)=\Psi((ij)X)=-\Psi(X),
\]
so $\Psi(X)=0$.

Lemma~\ref{lem:deletion}(iii), with $q=2$, gives symmetric cutoffs $\chi_n$ such that
\[
 \Psi_n:=\Psi\chi_n\longrightarrow \Psi\quad\text{in }H^2,
 \qquad
 \Psi_n=0\quad\text{near }\mathcal C.
\]
The quotient $\Phi_n=\Psi_n/\mathcal{D}_1^{(2)}$, extended by zero near $\mathcal C$, is therefore in
$C_{c,+}^\infty$, exactly as in \eqref{eq:b-one-factor}.  Thus
\[
 \Psi_n=\mathcal{D}_1^{(2)}\Phi_n\longrightarrow \Psi\quad\text{in }H^2.
\]

The critical estimate can be seen directly.  Near a collision, write
$r=x_i-x_j\in\R^2$.  Since $\Psi=0$ at $r=0$ and $\Psi$ is smooth,
$|\Psi|\leq C|r|$.  In the logarithmic annulus
$\varepsilon^2<|r|<\varepsilon$,
\[
 |D\chi_\varepsilon|\lesssim\frac1{|r||\log\varepsilon|},
 \qquad
 |D^2\chi_\varepsilon|\lesssim\frac1{|r|^2|\log\varepsilon|}.
\]
Consequently,
\[
 |\nabla\Psi\,\nabla\chi_\varepsilon|^2
 +|\Psi D^2\chi_\varepsilon|^2
 \lesssim
 \frac1{|r|^2|\log\varepsilon|^2}.
\]
Integration in the two normal variables gives
\[
 \int_{\varepsilon^2}^{\varepsilon}
 \frac{r\,dr}{r^2|\log\varepsilon|^2}
 =\frac1{|\log\varepsilon|},
\]
which tends to zero.

\section{Two factors in $H^2$ when $d\geq3$}

Let
\[
 \mathcal{D}_1^{(3)}=\Delta_{p_1},
 \qquad
 \mathcal{D}_2^{(3)}=\Delta_{p_2},
\]
with $p_1,p_2$ chosen as in Section~1.  Define their common zero set
\[
 Z_{12}:=\{X:\mathcal{D}_1^{(3)}(X)=\mathcal{D}_2^{(3)}(X)=0\}.
\]
If $\mathcal E$ denotes the set of unordered particle pairs, then
\begin{equation}\label{eq:common-zero-union}
 Z_{12}
 =\bigcup_{e,f\in\mathcal E}L_{e,f},
 \qquad
 L_{e,f}:=\{X:p_1(r_e)=0,\ p_2(r_f)=0\},
\end{equation}
where $r_{ij}=x_i-x_j$.

\subsection{Codimension count in $d=3$}

For $d=3$,
\[
 p_1(r)=r^{(1)}+ir^{(2)},
 \qquad
 p_2(r)=r^{(2)}+ir^{(3)}.
\]
If $e=f$, the four displayed real equations reduce to
\[
 r_e^{(1)}=r_e^{(2)}=r_e^{(3)}=0.
\]
Thus
\[
 L_{e,e}=\{x_i=x_j\}
\]
is a genuine collision subspace of codimension three.

If $e\neq f$, the four real equations are independent and
$\codim L_{e,f}=4$.  Here is a direct verification.  Let $v_e\in\R^N$ be the incidence vector of the edge $e$: it has one entry $+1$, one entry $-1$, and all other entries zero.  The four linear forms are
\[
 v_e\otimes e_1^*,
 \quad
 v_e\otimes e_2^*,
 \quad
 v_f\otimes e_2^*,
 \quad
 v_f\otimes e_3^*.
\]
In a linear dependence, the first-coordinate block forces the coefficient of
$v_e\otimes e_1^*$ to vanish, and the third-coordinate block forces the coefficient of
$v_f\otimes e_3^*$ to vanish.  The remaining relation is
\[
 a v_e+b v_f=0.
\]
Distinct unordered edges have non-collinear incidence vectors, hence $a=b=0$.

Write
\[
 Z_3:=\bigcup_eL_{e,e},
 \qquad
 Z_4:=\bigcup_{e\neq f}L_{e,f}.
\]
The function $\Psi$ vanishes on $Z_3$, because $Z_3$ is a union of collision sets.
Lemma~\ref{lem:deletion}(iii), with $q=3$, gives antisymmetric smooth functions
$g_n$ such that
\[
 g_n\to \Psi\quad\text{in }H^2,
 \qquad
 g_n=0\quad\text{near }Z_3.
\]
For each fixed $n$, apply Lemma~\ref{lem:deletion}(ii) to the codimension-four union
$Z_4$.  Choose a second cutoff so that the resulting $\Psi_n$ satisfies
\[
 \|\Psi_n-g_n\|_{H^2}<\frac1n,
 \qquad
 \Psi_n=0\quad\text{near }Z_3\cup Z_4=Z_{12}.
\]
Then $\Psi_n\to \Psi$ in $H^2$.

\subsection{Codimension count in $d\geq4$}

For $d\geq4$,
\[
 p_1(r)=r^{(1)}+ir^{(2)},
 \qquad
 p_2(r)=r^{(3)}+ir^{(4)}.
\]
The two real equations from $p_1(r_e)=0$ involve only coordinate directions $1,2$, whereas those from $p_2(r_f)=0$ involve only directions $3,4$.
Therefore all four equations are independent, for every $e,f$, including $e=f$.
Thus every $L_{e,f}$ in \eqref{eq:common-zero-union} has codimension four.
Lemma~\ref{lem:deletion}(ii) gives antisymmetric $\Psi_n\in C_c^\infty$ such that
\[
 \Psi_n\to \Psi\quad\text{in }H^2,
 \qquad
 \Psi_n=0\quad\text{near }Z_{12}.
\]

\subsection{Algebraic decomposition away from the common zero set}

In both cases, $d=3$ and $d\geq4$, we have constructed antisymmetric
$\Psi_n\in C_c^\infty$ such that
\[
 \Psi_n\to \Psi\quad\text{in }H^2,
 \qquad
 \Psi_n=0\quad\text{near }Z_{12}.
\]
Set
\begin{equation}\label{eq:S-denominator}
 S(X):=|\mathcal{D}_1^{(3)}(X)|^2+|\mathcal{D}_2^{(3)}(X)|^2.
\end{equation}
On $\mathcal X_{d,N}\setminus Z_{12}$, define
\begin{equation}\label{eq:two-coefficients}
 \phi_{\alpha,n}(X)
 :=\Psi_n(X)\frac{\overline{\mathcal{D}_\alpha^{(d)}(X)}}{S(X)},
 \qquad \alpha=1,2.
\end{equation}
Since $\Psi_n$ vanishes in a neighborhood of $Z_{12}$, each coefficient extends by zero to a globally smooth compactly supported function.

The denominator $S$ is symmetric.  Since both $\Psi_n$ and $\mathcal{D}_\alpha^{(d)}$ are antisymmetric,
\[
 \phi_{\alpha,n}(\sigma X)
 =\frac{\sgn(\sigma)\Psi_n(X)\,
        \overline{\sgn(\sigma)\mathcal{D}_\alpha^{(d)}(X)}}{S(X)}
 =\phi_{\alpha,n}(X).
\]
Thus $\phi_{\alpha,n}\in C_{c,+}^\infty$.  Finally,
\begin{align*}
 \mathcal{D}_1^{(3)}\Phi_{1,n}+\mathcal{D}_2^{(3)}\Phi_{2,n}
 &=\Psi_n\frac{|\mathcal{D}_1^{(3)}|^2+|\mathcal{D}_2^{(3)}|^2}{S}\\
 &=\Psi_n.
\end{align*}
Therefore
\[
 \mathcal{D}_1^{(3)}\Phi_{1,n}+\mathcal{D}_2^{(3)}\Phi_{2,n}\longrightarrow \Psi
 \quad\text{in }H^2.
\]
This proves the constructive part of Theorem~\ref{thm:main} for $d\geq3$.

\section{Passage from the smooth core to all of $H^s_-$}

Let $\psi\in H^s_-$ and choose $\Psi_m\in C_{c,-}^\infty$ with
\[
 \|\psi-\Psi_m\|_{H^s}<\frac1m
\]
by Lemma~\ref{lem:smooth-core}.  For each $m$, the preceding sections give a function
\[
 G_m=\sum_{\alpha=1}^{K_s^{(d)}}\mathcal{D}_\alpha^{(d)} \Phi_{\alpha,m},
 \qquad \Phi_{\alpha,m}\in C_{c,+}^\infty,
\]
such that
\[
 \|\Psi_m-G_m\|_{H^s}<\frac1m.
\]
Hence
\[
 \|\psi-G_m\|_{H^s}
 \leq\|\psi-\Psi_m\|_{H^s}+\|\Psi_m-G_m\|_{H^s}
 <\frac2m\longrightarrow0.
\]
$\square$.

\end{document}